\newtheorem{definition}{Definition}
\newtheorem{theorem}{Theorem}
\newtheorem{remark}{Remark}
\newtheorem{example}{Example}
\newcommand{\Set}{\ensuremath{\mathbf{Set}}}
\newcommand{\PA}{\ensuremath{\mathbf{PortAut}}}
\newcommand{\Connector}{\ensuremath{\mathbf{Connector}}}
\newcommand{\one}{\ensuremath{\mathbf{1}}}
\newcommand{\Sem}{\ensuremath{\mathcal{S}em}}
\newcommand{\N}{\ensuremath{\mathcal{N}}}
\newcommand{\A}{\ensuremath{\mathcal{A}}}
\newcommand{\C}{\ensuremath{\mathcal{C}}}
\newcommand{\Q}{\ensuremath{\mathcal{Q}}}
\definecolor{cblue}{rgb}{0.87,0.91,0.95}
\tikzstyle{cbox} = [fill=cblue,draw=gray,thick,rounded corners,
\tikzstyle{every state}=[node distance=2.0cm,inner sep=0pt,minimum
\tikzstyle{place}=[circle,thick,draw=black,fill=white,minimum size=3mm]
\tikzstyle{transition}=[rectangle,thick,draw=black,
\tikzstyle{every token}=[minimum size=1.1mm]
\tikzstyle{pre}=[thick,bend angle=32,<-]
\tikzstyle{post}=[thick,bend angle=32,->]
\newcommand{\tl}[1]{\scriptsize{\ensuremath{\mathsf{#1}}}}
\title{Integrated Structure and Semantics for Reo~Connectors~and~Petri~Nets}
\author{Christian Krause
	\footnote{Supported by NWO projects WoMaLaPaDiA and SYANCO.}
	\email{c.krause@cwi.nl}
	\institute{
  		CWI, P.O. Box 94079, 1090GB Amsterdam, The Netherlands
	}
}
\begin{document}

\maketitle

\begin{abstract}
In this paper, we present an integrated structural and behavioral
model of Reo connectors and Petri nets, allowing a direct comparison 
of the two concurrency models. 
For this purpose, we introduce 
a notion of \emph{connectors}
which consist of a number of interconnected, 
user-defined primitives with fixed behavior.
While the structure of connectors resembles
hypergraphs, their semantics is given
in terms of so-called
\emph{port automata}. We define both
models in a categorical setting where composition
operations can be elegantly defined and
integrated. Specifically, we formalize structural gluings of
connectors as pushouts, and joins of port automata as
pullbacks. We then define a semantical functor from the connector to
the port automata category which
preserves this composition. We further show how to encode
Reo connectors and Petri nets into this model and indicate
applications to dynamic reconfigurations modeled using 
double pushout graph transformation.
\end{abstract}

\section{Introduction}

Reo~\cite{Arbab04} is a channel-based coordination language which
has its main application area in component and service composition.
The idea in Reo is to construct complex, so-called \emph{connectors}
out of a set of user-defined primitives, most commonly channels.
Among a number of sophisticated features, such as mobility~\cite{SABB06}, 
context-dependency~\cite{CCA07,BCS09}\nocite{Tiles}\nocite{Clarke07} and dynamic
reconfigurability~\cite{KMLA10}, on a more basic level Reo can be 
seen also as a model of concurrency. 
Comparing Reo with Petri nets, the first obvious commonality is the fact that
they both use a graph-based model, i.e. their structure can be modeled using 
typed graphs. Moreover, both models combine control-flow and data-flow
aspects. In this paper, we are particularly interested in
the concurrency properties of the two models, i.e.
parallel or synchronized actions vs. interleaved or mutually
excluded actions. To understand the relationship between
Reo connectors and Petri nets, we follow an approach in
this paper where we map both models to so-called \emph{port automata}~\cite{KC09},
which serve as our common semantical domain. We can thereby gain an integrated
view on structure and semantics of Reo connectors and Petri nets and moreover 
compare both models.

As a motivating example, Fig.~\ref{fig:examples} depicts a Reo connector,
a Petri net and a port automaton, all modeling the same simple protocol.
If considering the initial state also as final, the accepted language
is $(AB+CD)^{*}$. Port automata model explicitly synchronization of actions.
This is witnessed by the fact that the transitions in the automaton are 
\emph{sets} of truly concurrent actions. Such a port automaton transition
corresponds to a concurrent firing of transitions in a Petri net, or a synchronized activity
on nodes in a Reo connector. This is our starting point for
using port automata as a common semantical models for the structural
models of Reo connectors and Petri nets. Our general idea is to compose 
-- potentially user-defined -- primitives into a graph-structure which we will
refer to as \emph{connector}. While in Reo, these primitives are communication channels,
in Petri nets we consider places as primitives.
Moreover our approach emphasizes compositionality, i.e.
the port automata semantics of primitives is predefined, but the semantics 
of connectors is derived using a join-operation.

\begin{figure}[t]
\centering
\input{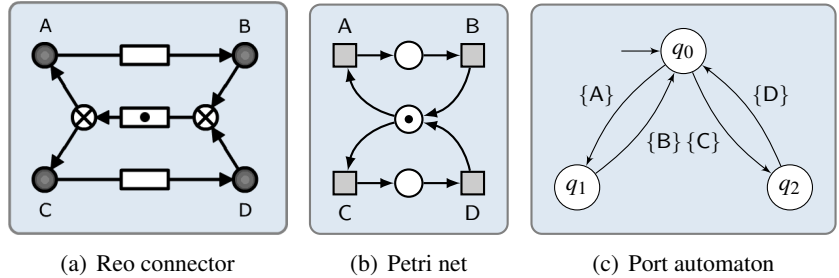}
\caption{Three descriptions of the same protocol.}
\label{fig:examples}
\end{figure}

We make the following contributions in this paper. We present
a model of connectors which combines 
structural and behavioral descriptions.
The structure of connectors resembles hypergraphs and their
semantics is defined using port automata. 
Our most important result is the compositionality of the model 
in the following sense: a structural gluing of connectors 
corresponds to a join of the corresponding port automata.
For this purpose, we define the categories {\Connector} and 
{\PA}, and a contravariant functor $\Sem: \Connector \to \PA^{op}$.
In categorical terms, our compositionality result means that this
functor sends pushouts of connectors to pullbacks of the 
corresponding port automata, i.e. for connectors $\C_0$, $\C_1$ and
$\C_2$:
\[
\Sem(\C_1 +_{\C_0} \C_2) = \Sem(\C_1) \times_{\Sem(\C_0)} \Sem(\C_2).
\]
Furthermore we show how Reo connectors and Petri nets can be modeled
directly in our framework. While for Petri nets, compositionality results 
similar to ours exist already, this paper constitutes the first 
formal integration of the graph structure and the automata semantics 
of Reo connectors. Further it is a starting point for synthesis 
algorithms and in particular for semantics of graph transformation 
based reconfigurations.  Specifically, our composition operation fits into the 
double pushout approach~\cite{DPO,EEPT06} for graph transformation,
which has been used for instance to model reconfigurations 
of Reo connectors in~\cite{KMLA10}, and of Petri nets 
in~\cite{PER01}.

\paragraph{Organization.} The rest of the paper is organized
as follows. We start with the semantical model by introducing
port automata in Section~\ref{sec:port-automata}. Based on this,
we then define our notion of connectors in Section~\ref{sec:connectors}.
Section~\ref{sec:semantics} contains our main compositionality
result and Section~\ref{sec:encodings} shows how Reo connectors
and Petri nets can be encoded in our connector model.
Finally, Section~\ref{sec:discussion} contains a discussion and 
future work, and Section~\ref{sec:related-work} includes related work.

\vspace{0.5cm}
\section{Port automata}
\label{sec:port-automata}

Port automata are an operational model for connectors
and have been mainly studied in the context of Reo.
They are an abstraction of so-called \emph{constraint automata}~\cite{BSAR06}
which is the quasi-standard semantics of Reo.
Port automata describe the synchronization on sets
of ports, depending on the internal state of the connector.
The model abstracts from both the direction and content
of data flow. For a proper modeling of data we refer to the
constraint automata model.

In this paper, we present port automata
in a categorical setting, i.e., we consider
them as objects in a category which we will denote with~\PA.
We now give the definition for port automata.
\vspace{0.2cm}
\begin{definition}[Port automaton] 
A port automaton $A =(Q,N,\to,i)$ consists of a set of states $Q$, 
a set of port names $N$, a transition relation $\to \subseteq Q \times 2^{N}
\times Q$ and an initial state $i \in Q$.
\end{definition}
We denote transitions often as $q\overset{S}{\longrightarrow}p$
with $q,p \in Q$ and $S\subseteq N$. 
The interpretation is that there is concurrent activity
at the ports $S$ and no activity at the rest of the ports $N\backslash S$. 
The model permits $\tau$-transitions, namely whenever $S=\emptyset$. Hence,
there can be silent steps without any action. In the following we define
a notion of port automata morphism.
\vspace{0.2cm}
\begin{definition}[Port automata morphism]
A morphism of port automata $f:A_1 \to A_2$ is a pair of functions $f=(f_Q,f_N)$
with $f_Q: Q_1 \to Q_2$ and $f_N: N_2 \to N_1$, such that: 
$f_Q(i_1)=i_2$ and 
for all transitions $q \overset{S_1}{\longrightarrow}_1 p$ in $A_1$
there exists a transition $f_Q(q) \overset{S_2}{\longrightarrow}_2
f_Q(p)$ in $A_2$ with
\begin{equation}
	\label{equ:aut-mor}
	f_N(N_2) \cap S_1 = f_N(S_2).
\end{equation}
\end{definition}

\vfill

Port automata morphisms can be seen as a kind of simulation.
The definition uses
a function for relating the states of the automata instead of a relation, which
one might expect for a simulation of automata. However, in our categorical
context, especially when mapping connector morphisms to (inverse) simulations,
this definition is sufficient and easier to handle.
Note further that the port names are mapped in the opposite direction and
that condition~(\ref{equ:aut-mor}) defines $S_2$ as the restricted preimage of
$S_1$. The following example illustrates this notion of automata morphisms.

\vfill

\begin{example}
An example of a port automata morphism is depicted in
Fig.~\ref{fig:automata-morphism}. States $q_0,q_2$ are both mapped to $p_0$, and
$q_1$ is mapped to $p_1$. The port names function is the inclusion map in the
opposite direction. The transition via $\{\mathsf{B,C}\}$ in the source corresponds
to the transition via $\{\mathsf{B}\}$, and $\{\mathsf{C}\}$ to
the $\tau$-step in the target automaton.
\end{example}

\vfill

\begin{figure}[h]
	\begin{center}
	\begin{tikzpicture}

  \begin{scope}[>=latex',->];

  \node[state,initial] (q0)                {$q_0$};
  \node[state]         (q1) [right of=q0]  {$q_1$};
  \path		(q0) edge [bend left=15]	node [above,name=A] {\tl{\{A\}}} (q1);
  
  \node[state]         (q2) [below of=A,node distance=1.5cm]  {$q_2$};
  \path     (q1) edge [bend left=15]	node [below right, inner sep=1pt]
  {\tl{\{B,C\}}} (q2) (q2) edge [bend left=15]	node [below left, inner sep=1pt]
            {\tl{\{C\}}} (q0) ;

  \node[state,initial] (p0) [right of=q1,node distance=3.25cm,yshift=-3.5mm] 
  {$p_0$}; \node[state] (p1) [right of=p0]   {$p_1$};
  \path		(p0) edge [bend left=15]	node [above,name=A'] {\tl{\{A\}}} (p1)
		         edge [loop above]      node[above,name=tau2] {\tl{ \emptyset } } ()
            (p1) edge [bend left=15]	node [below] {\tl{\{B\}}} (p0);

  \end{scope}
  
  \begin{pgfonlayer}{background}
    \node [cbox,fit=(q0) (q1) (q2) (A),minimum width=4cm,minimum
    height=2.5cm,name=source,outer xsep=1.5mm] {};
    \node [cbox,right of=source,node distance=5.25cm,minimum width=4cm,minimum
    height=2.5cm,name=target,outer xsep=1.5mm] {}; 
  \end{pgfonlayer}

  \draw[->,thick,gray] (source) -- (target);
  
\end{tikzpicture}
	\end{center}
\caption{A morphism of port automata.}
\label{fig:automata-morphism}
\end{figure}
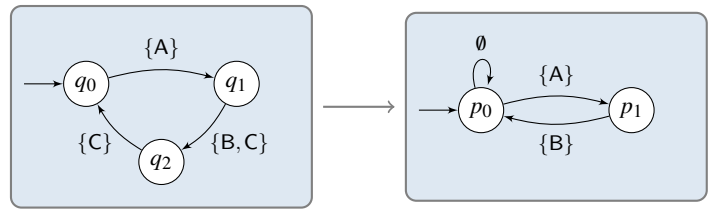

\vfill

Note that if the port name map $f_N$ is the identity, a morphism also gives
rise to a language inclusion. If there is no confusion, we abuse notation and
write $f$ for both $f_Q$ and $f_N$.
If there is a morphism between two port automata $A_0$ and $A_1$, we
may also write $A_0\succeq A_1$ for short.
Similarly, if there exists a (categorical) isomorphism, we denote
this by $A_0 \cong A_1$. Note that this notion of behavioral equivalence is
stronger than usual definitions, e.g. using bisimulations.
\vfill
Composition and identity of port automata morphisms are defined
componentwise in \Set. The resulting category of port automata is denoted by
{\PA}. The port automaton with one state, an empty port names set and
a $\tau$-transition is the final object in this category, denoted by $\one$.
At this point, we already make use of our categorical setting
and define composition of port automata using pullbacks.

\newpage
\begin{theorem}[Pullbacks of port automata]
\label{thm:pullbacks}
The category $\PA$ has pullbacks and they can be constructed componentwise
in \Set. For a cospan $A_1 \to A_0 \leftarrow A_2$, the pullback object
is $A_3 = (Q_3, N_3, \to_3, i_3)$ where\\
\begin{minipage}{0.3\textwidth}
\[\xymatrix{
  A_0 &	
  A_1 \ar[l]_{f_1} \\
  A_2 \ar[u]^{f_2}  & 
  A_3 \ar[u]^{g_1} 	\ar[l]_{g_2} \\ & & 
  X \ar@/_/[uul]_{h_1} \ar@/^/[ull]^{h_2} \ar@{.>}[ul]|-{h}
  }\]
\vspace{3mm}
\end{minipage}
\hfill
\begin{minipage}{0.65\textwidth}
\begin{itemize}
\item $Q_3 = Q_1 \times_{Q_0} Q_2$ (pullback in \Set)
\item $N_3 = N_1 +_{N_0} N_2$ (pushout in \Set)
\item $i_3 = \langle i_1,i_2\rangle$
\item if $q_1 \overset{S_1}{\longrightarrow_1} p_1$ and $q_2 \overset{S_2}{\longrightarrow_2} p_2$,
such that
\begin{equation}
	\label{equ:caut-po}
	g_1(S_1) \cap g_2(N_2) = g_2(S_2) \cap g_1(N_1) 
\end{equation}
then
	$\xymatrix{
	\langle q_1,q_2 \rangle 
	\ar[rr]^{g_1(S_1) \cup g_2(S_2)} &&
	\langle p_1,p_2 \rangle
	}$
in $A_3$.
\end{itemize}
\end{minipage}
\end{theorem}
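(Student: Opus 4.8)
The plan is to verify the three defining ingredients of a pullback in turn: that $g_1$ and $g_2$ are well-defined morphisms of $\PA$, that the square commutes, and that $(A_3,g_1,g_2)$ satisfies the universal property. Throughout I would exploit that the state component is a pullback in $\Set$ while the port-name component is a pushout, so that the state maps $g_{1,Q},g_{2,Q}$ are the usual projections and the port maps $g_{1,N},g_{2,N}$ are the pushout coprojections; all the genuinely new content lies in how these interact with the transition relation.

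First I would check that $g_1$ (and symmetrically $g_2$) is a morphism. Preservation of the initial state is immediate from $i_3=\langle i_1,i_2\rangle$. For the transition condition, a transition of $A_3$ is by construction labelled $S=g_1(S_1)\cup g_2(S_2)$ with $S_1,S_2$ satisfying the compatibility equation~(\ref{equ:caut-po}); I must show $g_{1,N}(N_1)\cap S=g_{1,N}(S_1)$. Distributing the intersection gives $g_1(N_1)\cap S=g_1(S_1)\cup\bigl(g_1(N_1)\cap g_2(S_2)\bigr)$, and equation~(\ref{equ:caut-po}) rewrites $g_1(N_1)\cap g_2(S_2)$ as $g_1(S_1)\cap g_2(N_2)$, which is contained in $g_1(S_1)$; hence the right-hand side collapses to $g_1(S_1)$, as required. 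Commutativity of the square then follows componentwise: on states from the defining equation of the $\Set$-pullback $Q_3$, and on port names (which compose contravariantly) from the defining equation $g_{1,N}\circ f_{1,N}=g_{2,N}\circ f_{2,N}$ of the $\Set$-pushout $N_3$.

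The universal property is the heart of the argument. Given a competing cone $X$ with morphisms $h_1,h_2$ satisfying $f_1\circ h_1=f_2\circ h_2$, the mediating morphism $h$ is forced on each component: $h_Q:Q_X\to Q_3$ is the unique map into the $\Set$-pullback determined by $h_{1,Q},h_{2,Q}$, and $h_N:N_3\to N_X$ is the unique map out of the $\Set$-pushout determined by $h_{1,N},h_{2,N}$ (the requisite compatibility $h_{1,N}\circ f_{1,N}=h_{2,N}\circ f_{2,N}$ being exactly the port-name part of $f_1\circ h_1=f_2\circ h_2$). This immediately yields uniqueness and the commutations $g_1\circ h=h_1$, $g_2\circ h=h_2$; what remains is to show that this forced $h$ is actually a port automata morphism.

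The main obstacle is verifying the transition condition for $h$. Given a transition $x\overset{S_X}{\to}_X y$, the morphism conditions for $h_1$ and $h_2$ supply transitions $h_{1,Q}(x)\overset{S_1}{\to}_1 h_{1,Q}(y)$ and $h_{2,Q}(x)\overset{S_2}{\to}_2 h_{2,Q}(y)$ with $h_{1,N}(N_1)\cap S_X=h_{1,N}(S_1)$ and $h_{2,N}(N_2)\cap S_X=h_{2,N}(S_2)$; I then need these to assemble into a single transition of $A_3$, i.e. to satisfy the compatibility equation~(\ref{equ:caut-po}), and the resulting label $g_1(S_1)\cup g_2(S_2)$ to satisfy $h_N(N_3)\cap S_X=h_N(g_1(S_1)\cup g_2(S_2))$. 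The key technical ingredient I expect to need is the pushout overlap lemma $g_1(N_1)\cap g_2(N_2)=g_1(f_1(N_0))=g_2(f_2(N_0))$, which holds because any element identified across the two coprojections of a $\Set$-pushout must originate from $N_0$. Chasing an element of $g_1(S_1)\cap g_2(N_2)$ through this lemma and through $h_N$ (using $h_N\circ g_i=h_{i,N}$ together with the two morphism equations) is what establishes~(\ref{equ:caut-po}) and the label equation, and this bookkeeping between the pullback of states, the pushout of ports, and the transition relation is the step I would budget the most care for.
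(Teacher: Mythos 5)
Your overall architecture coincides with the paper's appendix proof, and the first half is carried out correctly: the computation showing $g_1(N_1)\cap\bigl(g_1(S_1)\cup g_2(S_2)\bigr)=g_1(S_1)$ --- distribute the intersection, absorb $g_1(N_1)\cap g_1(S_1)=g_1(S_1)$ via $S_1\subseteq N_1$, rewrite the cross term with~(\ref{equ:caut-po}) --- is essentially verbatim the paper's, and your observation that $h$ is forced componentwise by the $\Set$-pullback of states and $\Set$-pushout of names is also exactly how the paper proceeds. The label equation $h_N(N_3)\cap S_X=h_N\bigl(g_1(S_1)\cup g_2(S_2)\bigr)$, which you only gesture at, is established in the paper by a short computation from $h_N\circ g_{j,N}=h_{j,N}$, distributivity, and the morphism equations for $h_1,h_2$; that part is routine and poses no problem. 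Your pushout overlap lemma $g_1(N_1)\cap g_2(N_2)=g_1(f_1(N_0))=g_2(f_2(N_0))$ is also true in $\Set$.

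The genuine gap is precisely the step you flagged as needing the most care: establishing~(\ref{equ:caut-po}) for the witnesses $S_1,S_2$ supplied by the morphism conditions of $h_1,h_2$. The chase you outline gets stuck at the passage from $h_{2,N}(b)\in h_{2,N}(S_2)$ to $g_{2,N}(b)\in g_{2,N}(S_2)$, which would need $h_{2,N}$ to be injective relative to the pushout identifications --- and no bookkeeping can close it, because the claim fails as stated. Concretely: let every automaton have a single state; take $N_0=\{c\}$, $N_1=\{a\}$, $N_2=\{b,b'\}$, $f_{1,N}(c)=a$, $f_{2,N}(c)=b$; give $A_0$ loops labelled $\{c\}$ and $\emptyset$, give $A_1$ only the loop $\{a\}$ and $A_2$ only the loop $\{b'\}$; let $X$ have $N_X=\{n\}$ and a single loop labelled $\{n\}$, with $h_{1,N}(a)=n$ and $h_{2,N}(b)=h_{2,N}(b')=n$. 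All morphism conditions hold and $f_1\circ h_1=f_2\circ h_2$, yet the only available pair $(S_1,S_2)=(\{a\},\{b'\})$ violates~(\ref{equ:caut-po}): the left side is $\{g_1(a)\}$ while the right side is $\emptyset$, since $b'$ is not identified with $a$. Hence $\to_3$ is empty and the forced $h$ cannot satisfy~(\ref{equ:aut-mor}) for the loop $\{n\}$, so the universal property fails. Be aware that the paper's own appendix closes this step by ``recalling'' that $g_1,g_2$ are valid morphisms and asserting $g_j(N_j)\cap S_3=g_j(S_j)$ --- but that identity was proved only for pairs already satisfying~(\ref{equ:caut-po}), i.e., for transitions already known to exist in $A_3$, so the paper's argument is circular at exactly the point where your chase breaks down. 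Your instinct that this is the crux is therefore correct, but the step needs an additional hypothesis rather than more careful bookkeeping; for instance, in the example above, requiring $\tau$-loops at every state (as all of the paper's Reo primitives indeed have) restores the mediating morphism.
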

\begin{proof}[Proof sketch]
It is sufficient to show that the componentwise construction of $g_1,g_2$ and
$h$ yields valid morphisms of port automata, i.e. that
condition (\ref{equ:aut-mor}) holds. A detailed proof is
given in the appendix.
\end{proof}

\begin{example}
An example of a port automata pullback is depicted in
Fig.~\ref{fig:port-automata-pullback}. The state maps are
indicated by the indices, e.g. $p_0$ is mapped to $q_0$
and $p_1, p'_1, p_2$ are all mapped to $q_{12}$.
The resulting automaton on the bottom right is the automaton from
our previous example in Fig.~\ref{fig:examples}. Note that
it actually includes more states which are not shown 
here because they are unreachable.
\end{example}

\begin{figure}[t]
	\begin{center}
	\begin{tikzpicture}
  
  \begin{scope}[>=latex',->];

  \node[state,initial] (q2)                {$p_0$};
  \node[state]         (q3) [below of=q2]  {$p_2$};
  \node[state]         (q4) [right of=q2]  {$p_1$};
  \node[state]         (q5) [right of=q3]  {$p'_0$};
  \path		(q2) edge[bend right=15] node [left,name=c] {\tl{\{A\}}} (q3)
  			(q3) edge[bend right=15] node [right] {\tl{\{B\}}} (q2)
          	(q2) edge[bend left=15] node [above,name=a] {\tl{\{C\}}} (q4) 
          	(q4) edge[bend left=15] node [below] {\tl{\{D\}}} (q2) 
          	(q3) edge[bend left=15] node [above] {\tl{\{D\}}} (q5) 
          	(q5) edge[bend left=15] node [below] {\tl{\{C\}}} (q3) 
          	 ;

  \node[state]         (q1) [left of=c,node distance=2.5cm]  {$q_{12}$};
  \node[state,initial] (q0) [left of=q1]   {$q_0$};
  \path	(q0) edge [bend left=60] node [above] {\tl{ \{A\} }} (q1)
  		(q0) edge [bend left=15] node [above] {\tl{ \{C\} }} (q1)
        (q1) edge [bend left=15] node [below] {\tl{ \{D\} }} (q0)
        (q1) edge [bend left=60] node [below] {\tl{ \{B\} }} (q0);

  \node[state,initial] (r0) [below of=q0, node distance=3.5cm]  {$r_0$};
  \node[state]         (r1) [right of=r0]  {$r_1$};
  \node[state]         (r2) [below of=r0]  {$r_2$};
  \node[state]         (r3) [below of=r1]  {$r'_0$};
  \path		(r0) edge[bend left=15] node [above] {\tl{\{C\}}} (r1)
  			(r1) edge[bend left=15] node [below] {\tl{\{D\}}} (r0)
          	(r0) edge[bend right=15] node [left] {\tl{\{A\}}} (r2) 
          	(r2) edge[bend right=15] node [right] {\tl{\{B\}}} (r0) 
          	(r1) edge[bend right=15] node [left] {\tl{\{B\}}} (r3) 
          	(r3) edge[bend right=15] node [right,name=c2] {\tl{\{A\}}} (r1) 
          	 ;

  \node[state,initial] (s0) [below of=q2, node distance=4.5cm]  {$s_0$};
  \node[state]         (s2) [below of=s0]  {$s_2$};
  \node[state]         (s3) [right of=s0]  {$s_1$};
  \path (s0) edge[bend right=15] node [left] {\tl{\{A\}}} (s2) 
        (s2) edge[bend right=15] node [right] {\tl{\{B\}}} (s0) 
        (s0) edge[bend left=15] node [above] {\tl{\{C\}}} (s3) 
        (s3) edge[bend left=15] node [below] {\tl{\{D\}}} (s0) ;

  \end{scope}
  
  \begin{pgfonlayer}{background}
    \node [cbox,fit=(q0) (q1),minimum width=4cm,minimum
           height=3.6cm,name=A1,outer sep=1.5mm] {}; 
    \node [cbox,fit=(q2) (q5),minimum width=4cm,minimum
           height=3.6cm,name=A2,outer sep=1.5mm] {};
    \node [cbox,fit=(r0) (r3),minimum width=4cm,minimum
           height=3.6cm,name=A3,outer sep=1.5mm] {};
    \node [cbox,right of=A3,node distance=5.0cm, minimum width=4cm,minimum
           height=3.6cm,name=A4,outer sep=1.5mm] {};
  \end{pgfonlayer}

  \draw[->,thick,gray] (A2) -- (A1);
  \draw[->,thick,gray] (A3) -- (A1);
  \draw[->,thick,gray] (A4) -- (A2);
  \draw[->,thick,gray] (A4) -- (A3);
  
\end{tikzpicture}
	\end{center}
\caption{A pullback of port automata.}
\label{fig:port-automata-pullback}
\vspace{-2mm}
\end{figure}
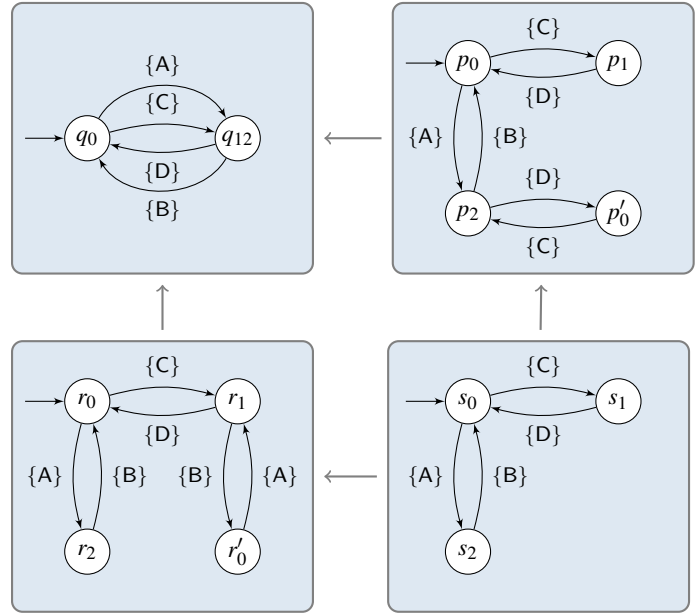

We use the default notation for pullbacks of port automata, i.e.
$A_3 = A_1 \times_{A_0} A_2$.
This notion of composition generalizes the join-operation
in~\cite{KC09} for port automata and in~\cite{BSAR06} for constraint automata
since it allows a composition along a common interface automaton. In
the traditional approaches, automata are joined only along a common set of port
names. Moreover, the categorical construction using pullbacks includes the
morphisms into the original automata and thereby relates them with the
result using simulations. Note also, that we have indirectly shown that 
{\PA} has general limits, since it has pullbacks and a final object.

In the following theorem, we phrase a basic compatibility result for port
automata morphisms, which is a direct consequence of the pullback construction.
\begin{theorem}[Compatibility with simulations]
Given two cospans of simulations $A_1 \overset{f_1}{\longrightarrow} A_0
\overset{f_2}{\longleftarrow} A_2$ and 
$B_1 \overset{g_1}{\longrightarrow} B_0 \overset{g_2}{\longleftarrow} B_2$,
then
\begin{enumerate}
  \item for three morphisms $h_i: A_i \to B_i$ with $i\in \{0,1,2\}$:
  \vspace{2mm}
  \begin{itemize} 
    \item if $h_0 \circ f_1 = h_1 \circ g_1$ and $h_0 \circ f_2 = h_2 \circ
    g_2$ then $\left( A_1 \times_{A_0} A_2 \right) \succeq \left( B_1
    \times_{B_0} B_2 \right)$.
  \end{itemize} 
  \vspace{2mm}
  \item if $A_1 \succeq B_1$ and $A_2 \succeq B_2$ then 
  $\left( A_1 \times_{A_0} A_2 \right) \succeq \left( B_1 \times B_2
  \right)$.
\end{enumerate}
\end{theorem}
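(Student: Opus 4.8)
The plan is to obtain the required simulations purely from the universal property of pullbacks established in Theorem~\ref{thm:pullbacks}, so that no re-verification of condition~(\ref{equ:aut-mor}) is needed: once the relevant cone commutes, the factoring map is automatically a morphism in \PA. Throughout, write $P_A = A_1 \times_{A_0} A_2$ with projections $\pi_1: P_A \to A_1$ and $\pi_2: P_A \to A_2$, and $P_B = B_1 \times_{B_0} B_2$ with projections $\rho_1: P_B \to B_1$ and $\rho_2: P_B \to B_2$. Recall that $P_A$ is characterized by the property that $f_1 \circ \pi_1 = f_2 \circ \pi_2$ and that any cone over the $B$-cospan factors uniquely through $P_B$.

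For part~(a), I would form the two composites $h_1 \circ \pi_1 : P_A \to B_1$ and $h_2 \circ \pi_2 : P_A \to B_2$ and check that they constitute a cone over the cospan $B_1 \overset{g_1}{\longrightarrow} B_0 \overset{g_2}{\longleftarrow} B_2$, i.e. that $g_1 \circ h_1 \circ \pi_1 = g_2 \circ h_2 \circ \pi_2$. This is exactly where the two hypotheses enter. Reading them as the assertion that the left and right squares commute, namely $g_1 \circ h_1 = h_0 \circ f_1$ and $g_2 \circ h_2 = h_0 \circ f_2$, the two sides rewrite as $h_0 \circ f_1 \circ \pi_1$ and $h_0 \circ f_2 \circ \pi_2$, which agree because $\pi_1,\pi_2$ already satisfy the defining equation $f_1 \circ \pi_1 = f_2 \circ \pi_2$ of $P_A$. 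The universal property of $P_B$ then yields a unique morphism $h_3 : P_A \to P_B$ with $\rho_1 \circ h_3 = h_1 \circ \pi_1$ and $\rho_2 \circ h_3 = h_2 \circ \pi_2$, and this $h_3$ is precisely the simulation witnessing $\left(A_1 \times_{A_0} A_2\right) \succeq \left(B_1 \times_{B_0} B_2\right)$.

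For part~(b), I would derive it as the special case $B_0 = \one$ of part~(a). The product $B_1 \times B_2$ coincides with the pullback $B_1 \times_{\one} B_2$ taken along the unique cospan into the final object, so the two claimed objects are the same. The hypotheses $A_1 \succeq B_1$ and $A_2 \succeq B_2$ supply the morphisms $h_1$ and $h_2$; taking $h_0 : A_0 \to \one$ to be the unique morphism into the final object, the two square-commutativity conditions required by part~(a) hold automatically, since any two parallel morphisms with codomain $\one$ are equal. Part~(a) then delivers the desired morphism $P_A \to B_1 \times B_2$.

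The argument is essentially formal, so I expect no serious obstacle beyond the bookkeeping of the cone condition. The one point to get right is the composition order in the two hypotheses, which I interpret as the two squares commuting; after that reading, the chain of equalities collapses in a single line. Notably, the componentwise description of the pullback in Theorem~\ref{thm:pullbacks} is never used here, only its universal property, which is what keeps the whole proof to a few lines and explains why part~(b) is merely the finality of $\one$ specializing part~(a).
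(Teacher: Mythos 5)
Your proposal is correct and follows essentially the same route as the paper's own proof: you form the cone $h_1 \circ \pi_1$, $h_2 \circ \pi_2$ over the $B$-cospan, verify it commutes using the two hypotheses (correctly read as commuting squares, i.e.\ $g_1 \circ h_1 = h_0 \circ f_1$, silently fixing the composition-order slip in the statement) together with the defining equation of $A_1 \times_{A_0} A_2$, and obtain $h_3$ from the universal property of $B_1 \times_{B_0} B_2$, exactly as the paper does with its cube diagram. Part (b) via $B_0 = \one$ and finality is likewise identical to the paper's argument.
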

\begin{proof}
Consider the following diagram where $(1)$ and $(2)$ are pullbacks of the given
cospans:
\vspace{-3mm}
\[
\xymatrix @C2mm @R2mm {
  && 
  A_0 \ar[rrrrrr]^{h_0} 
      &&&&&& 
  B_0 \\
      &&&&
  A_2 \ar[rrrrrr]^(0.25){h_2} 
      \ar[ull]_(.35){f_2} 
      \ar@{}[dllll]|{(1)} 
      &&&&&& 
  B_2 \ar[ull]_{g_2} 
      \ar@{}[dllll]|{(2)} 
      \\ 
  A_1 \ar'[rrr][rrrrrr]^(.5){h_1} 
      \ar[uurr]^{f_1} 
      &&&&&& 
  B_1 \ar'[ur][uurr] ^(.25){g_1} 
      \\ 
      && 
  A_3 \ar[uurr]_(.4){f_{2}'} 
      \ar@{.>}[rrrrrr]^{h_3} 
      \ar[ull]^{f_{1}'} 
      &&&&&& 
  B_3 \ar[ull] 
      \ar[uurr] 
      \\
}
\]
The precondition of \emph{(a)} states that the top-face and the back-face 
commute and $(1)$ commutes as it is a pullback. Hence $g_1 \circ h_1 \circ
f'_1 = g_2 \circ h_2 \circ f'_2 : A_3 \to B_0$. The morphism $h_3$ is then
uniquely determined by the pullback $(2)$ and hence $A_3 \succeq B_3$. 
For $(b)$ we take $B_0=\one$ the final object and automatically obtain
the precondition of $(a)$. Thereby:
$\left( A_1 \times_{A_0} A_2 \right) \succeq \left( B_1 \times_{\one} B_2
\right) = \left( B_1 \times B_2 \right)$.
\end{proof}

Based on the given semantical model, we are now able to enrich it
with structural aspects. We do so by introducing our notion of 
connectors in the following section.

\section{Connectors}
\label{sec:connectors}

The model that we use here is motivated by the 
idea of constructing complex connectors out of 
a set of primitives with predefined behavior.
In our context, the primitives are specified
as port automata and connectors are just
collections of port automata with overlapping
port names.
\begin{definition}[Connector] \label{def:connector}
A connector $\C = (\A,\N) $ consists of a set of port 
automata $\A$ and a set of nodes $\N$, such that $N \subseteq \N$ for all
$A=(Q,N,\to,i)\in \A$.
\end{definition}
Port names can now be interpreted as nodes and the port automata
as edges in a hypergraph. We will refer to the port automata
in a connector as \emph{primitives}. As mentioned already,
the idea is to construct arbitrarily complex connectors 
out of a fixed class of primitives, e.g. the set of 
standard channels in Reo.
\begin{definition}[Connector morphism] \label{def:connector-morphism}
A connector morphism $f: \C_1 \to \C_2$ is a pair of functions
$f = (f_\A: \A_1 \to \A_2, f_\N : \N_1 \to \N_2)$
such that for all $A=(Q,N,\to,i) \in \A_1$ there exists a port automata
morphism $f_A: f_\A(A) \to A$ with $f_A(N)=f_\N(N)$.
\end{definition}
A connector morphism consists of a map of nodes and a map of primitives 
from the source to the target connector.
Moreover, for all mapped primitives there must exist simulations in the
opposite direction, and the port name map must be compatible with the
nodes map.
It is worth mentioning at this point, that the existence of 
an inverse simulation has the consequence that a primitive can 
in principle be mapped to primitive with potentially
different interface (port name sets) and behavior 
(port automaton itself). Due to this property, connector morphisms 
permit a refinement of primitives.

Composition and identity of connector morphisms are again defined 
componentwise in \Set. We denote the category of connectors and their
morphisms as {\Connector}. 
We use pushouts to compose connectors. This makes the approach particularly
interesting for applying algebraic graph transformation techniques for modeling
reconfigurations (cf.~\cite{KMLA10,KAV09}). 

\begin{theorem}[Pushouts of connectors]
\label{thm:connector-pushouts}
The category {\Connector} has pushouts. 
For a span of connectors $\C_1 \leftarrow
\C_0 \to \C_2$ the pushout object is given by $\C_3 = (\A_3,\N_3)$ with\\
\begin{minipage}{0.25\textwidth}
\[\xymatrix{
  \C_0 \ar[r]^{f_1} \ar[d]_{f_2} &	
  \C_1 \ar[d]_{g_1} \ar@/^/[ddr]^{h_1} \\
  \C_2 \ar[r]^{g_2} \ar@/_/[drr]_{h_2} & 
  \C_3 \ar@{.>}[dr]|-{h} \\ & & X 
  }\]
\vspace{3mm}
\end{minipage}
\hfill
\begin{minipage}{0.7\textwidth}
\vspace{3mm}
\begin{itemize}
\item $\A_3 = \A_1 +_{\A_0} \A_2$ (pushout in \Set)
\item $\N_3 = \N_1 +_{\N_0} \N_2$ (pushout in \Set)
\vspace{3mm}
\item for all $A_0 \in \A_0$, $A_1=f_1(A_0)$ and $A_2=f_2(A_0)$:
\begin{equation}
\label{equ:connector-pushout-1}
A_3 = A_1 \times_{A_0} A_2 \in \A_3\;(pullback~in~\PA)
\end{equation}
\item for all $A_j \in \A_j \backslash f_j(\A_0), \; j\in \{1,2\}:$
\begin{equation}
\label{equ:connector-pushout-2}
 A_3 = A_j \in \A_3
\end{equation}
\end{itemize}
\end{minipage}
\end{theorem}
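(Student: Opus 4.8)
The plan is to verify that the proposed object $\C_3 = (\A_3, \N_3)$ together with the evident morphisms $g_1, g_2$ really is the pushout of the span $\C_1 \leftarrow \C_0 \to \C_2$ in $\Connector$. This splits into three tasks: (i) confirm that $g_1, g_2$ are well-defined connector morphisms, (ii) check that the square commutes, and (iii) establish the universal property. The underlying node component is already a pushout in $\Set$ by construction, so all the genuine work concerns the primitive component and, crucially, the inverse simulations that Definition~\ref{def:connector-morphism} demands.

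**First I would** make the morphisms $g_1, g_2$ precise. On nodes they are the pushout injections $\N_j \to \N_3$ in $\Set$. On primitives, $g_{j,\A}$ is the injection $\A_j \to \A_3$ into the $\Set$-pushout; the subtlety is the required simulation $g_{j,A}: g_{j,\A}(A) \to A$ for each $A \in \A_j$. For a primitive $A_j \in \A_j \setminus f_j(\A_0)$ the image in $\A_3$ is $A_j$ itself by~(\ref{equ:connector-pushout-2}), so the simulation is the identity. For a primitive lying in the image of $\A_0$, the image in $\A_3$ is the pullback $A_3 = A_1 \times_{A_0} A_2$ by~(\ref{equ:connector-pushout-1}), and here I would take the simulation to be exactly the projection morphism $f_1': A_3 \to A_1$ (respectively $f_2': A_3 \to A_2$) supplied by Theorem~\ref{thm:pullbacks}. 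It remains to check the node-compatibility condition $g_{j,A}(N) = g_{j,\N}(N)$; this follows because the port-name component of the pullback is $N_3 = N_1 +_{N_0} N_2$ and the pullback projections act on port names precisely as the pushout injections, which agree with the node injections by construction.

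**Next I would** verify commutativity $g_1 \circ f_1 = g_2 \circ f_2$. On nodes this is immediate from the pushout property in $\Set$. On primitives, a shared primitive $A_0 \in \A_0$ is sent by both composites to the pullback $A_1 \times_{A_0} A_2 \in \A_3$, matching the $\Set$-pushout identification of $f_1(A_0)$ with $f_2(A_0)$; I must also check that the associated simulations compose correctly, i.e. that the simulation from $A_3$ back to $A_0$ obtained along each path coincides, which is exactly the commutativity of the pullback square $A_1 \times_{A_0} A_2 \to A_j \to A_0$ guaranteed by Theorem~\ref{thm:pullbacks}.

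**The hard part will be** the universal property. Given a competing cocone $h_1: \C_1 \to X$, $h_2: \C_2 \to X$ with $h_1 \circ f_1 = h_2 \circ f_2$, I must produce a unique $h: \C_3 \to X$. On nodes, $h_\N$ is forced by the $\Set$-pushout. On primitives outside the image of $\A_0$, $h_\A$ is likewise forced. The real content is the primitives of the form $A_3 = A_1 \times_{A_0} A_2$: I must supply the \emph{inverse} simulation $h_{A_3}: h_\A(A_3) \to A_3$. Here I would exploit that $X$ provides simulations $h_1$ from $h_\A(A_3)$ into $A_1$ and $h_2$ from $h_\A(A_3)$ into $A_2$ which agree over $A_0$ (by the cocone condition); since $A_3$ is a \emph{pullback} in $\PA$, its universal property yields a unique mediating simulation $h_\A(A_3) \to A_3$. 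This is precisely the contravariant twist that makes the connector pushout work: pushout on the structure dualizes to pullback on the semantics, so the covariant universal property of the connector pushout is discharged by the \emph{limit} universal property of the port-automata pullback. I would then check node-compatibility of $h$ and uniqueness, the latter following because each component is uniquely determined either by a $\Set$-colimit or by the uniqueness clause of the $\PA$-pullback in Theorem~\ref{thm:pullbacks}. The main obstacle to watch is keeping the variances straight and confirming that the simulation furnished by the pullback genuinely satisfies the node-map compatibility of Definition~\ref{def:connector-morphism}, rather than merely being a bare port-automata morphism.
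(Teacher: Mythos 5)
Your proposal is correct and takes essentially the same route as the paper: a componentwise construction in \Set, with the inverse simulations for $g_1,g_2$ given by identities in case~(\ref{equ:connector-pushout-2}) and by the pullback projections of Theorem~\ref{thm:pullbacks} in case~(\ref{equ:connector-pushout-1}), and node-compatibility checked via the canonical injections into $N_3$ and $\N_3$. The paper compresses the universal property into a single sentence (``validity of $h$ can be shown analogously''), so your expansion --- obtaining the required inverse simulation $h_\A(A_3) \to A_1 \times_{A_0} A_2$ from the universal property of the \PA-pullback --- is precisely the intended argument, spelled out in more detail than the paper itself provides.
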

\begin{proof}
Due to the componentwise construction in {\Set} and {\PA} again
we have to show only that the construction yields a valid connector $\C_3$
and valid connector morphisms $g_1,g_2$ and $h$. The connector $\C_3$ is valid
since 
\[
N_3 = N_1 +_{N_0} N_2 \; \subseteq \; \N_1 +_{\N_0} \N_2 = \N_3
\vspace{0.2cm}
\]
in case (\ref{equ:connector-pushout-1}) and $N_3 = N_j \subseteq \N_j =
\N_3$ in case (\ref{equ:connector-pushout-2}).
Moreover, for every $A \in \A_j$ there exists a port automata morphism
$g_A: g_j(A) \to A$. In case (\ref{equ:connector-pushout-2}) it is the identity
and in (\ref{equ:connector-pushout-1}) it is the projection of the pullback.
Since the port name maps in $g_A$ and $g_j$ are both constructed as the injections
into $N_3$ and $\N_3$ respectively, $g_A(N) = g_{j}(N)$ holds as well.
Hence, $g_1,g_2$ are valid connector morphisms. Validity of $h$
can be shown analogously.
\end{proof}
\begin{figure}[t]
	\begin{center}
	
\begin{tabular}{ m{30mm}  m{10mm}  m{30mm}}
	\includegraphics[width=30mm]{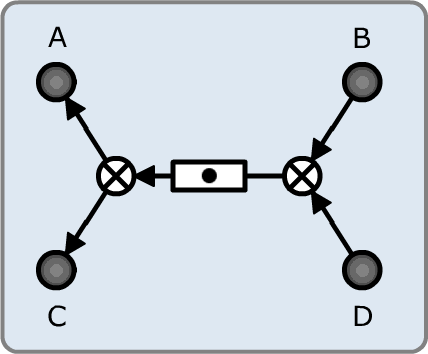} &
	\tikz{ \draw[->,thick,gray] (0,0) -- (1,0);} &
	\includegraphics[width=30mm]{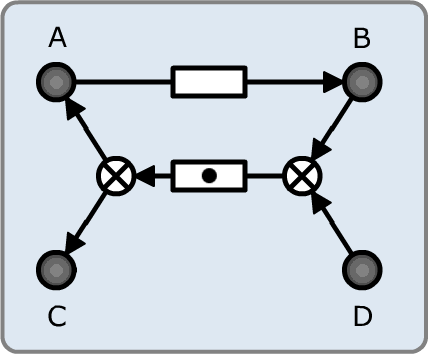} \\
	\hspace{13mm} \tikz{ \draw[->,thick,gray] (0,1) -- (0,0);} & &
	\hspace{13mm} \tikz{ \draw[->,thick,gray] (0,1) -- (0,0);} \\
	\includegraphics[width=30mm]{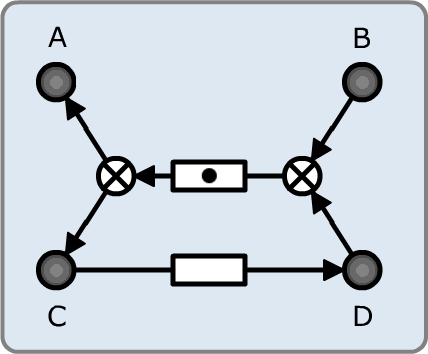} &
	\tikz{ \draw[->,thick,gray] (0,0) -- (1,0);} &
	\includegraphics[width=30mm]{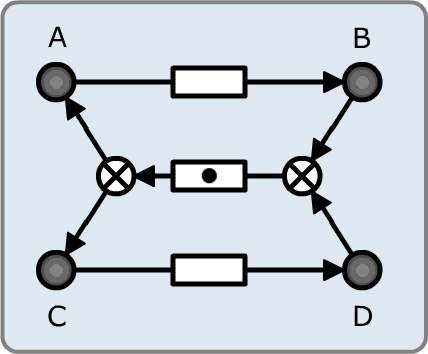} \\
\end{tabular}

	\end{center}
\caption{A pushout of Reo connectors.}
\label{fig:connector-pushout}
\end{figure}

\begin{example}
A pushout of Reo connectors in purely structural notation is depicted
in Fig.~\ref{fig:connector-pushout}. In this notation, nodes (which correspond
to port names) are depicted as filled circles. The port automata semantics for
the different channel types are given in Fig.~\ref{fig:primitives}.
So-called $FIFO$ channels are asynchronous channels with a buffer of size one.
They are represented as arrows with a rectangle in the middle.
There are in fact two versions of this channel type: with and
without a token, respectively called $FullFIFO$ and $EmptyFIFO$. Circles with a
cross denote two dual primitives: the $Router$ (left of the $FullFIFO$)
and the $Merger$ (right of it). Both have in total three ports and the same
semantics (cf.~Fig.~\ref{fig:primitives} for their port automata semantics).

Note that we abused notation in this example, in the sense that there are two hidden
nodes between the $Router$, the $FullFIFO$ and the $Merger$.
The resulting connector on the bottom right is the initial example from
Fig.~\ref{fig:examples}. It consists in total of five primitives (three
$FIFO$s, one $Merger$ and one $Router$) and six nodes ($A$-$D$, plus two hidden ones).
\end{example}

\section{Compositional Semantics}
\label{sec:semantics}

In this section we show how to compute the port automaton for 
a connector using its primitives' semantics. We extend 
this mapping to a functor and show compositionality.

\begin{remark}
In the following definitions we use the fact that the product
is associative and commutative, i.e. there exist
natural isomorphisms $(A \times B) \times C \cong A \times (B \times C)$ 
and $A \times B \cong B \times A$.
\end{remark}
\begin{definition}[Connector semantics] 
\label{def:connector-sem}
Given a connector $\C=(\A,\N)$ with $\A=\{A_1,\ldots, A_n\}$ and 
$A_j = (Q_j,N_j, \to_j, i_j)$ for $j\in \{1,\ldots,n\}$. We define
\[
\Sem(\C)=(Q_1 \times \ldots \times Q_n,\N,\to,\langle i_1, \ldots,i_n
\rangle)
\] 
where $\to$ is given by:
\begin{equation}
\frac{
\forall j,k \in \{1,\ldots,n\}: \; 
 q_j \overset{S_j}{\longrightarrow}_j p_j, \quad 
 q_k \overset{S_k}{\longrightarrow}_k p_k, \quad
 S_j \cap N_k = S_k \cap N_j
}
{
 \xymatrix{
	\langle q_1,\ldots,q_n \rangle 
	\ar[rr]^{S_1 \cup \ldots \cup S_n} &&
	\langle p_1,\ldots,p_n \rangle }
}
\label{rule:connector-sem}
\end{equation}
\end{definition}
Note that the nodes of the connector become the port names
of the resulting port automaton. The definition further
implies that all actions on a (shared) node are synchronized. 
This corresponds to so-called Hoare-style synchronizations, as
opposed to Milner-style synchronizations where exactly one
input end is synchronized with one output end.
We extend now the given connector semantics to a functor
$\Sem: \Connector \to \PA^{op}$.

\begin{theorem}[Semantics functor]
\label{def:functor}
Let $f=(f_\A, f_\N): \C_1 \to \C_2$ 
a connector morphism, 
$\Sem(\C_1) = (\Q_1,\N_1,\to_1,\iota_1)$ and 
$\Sem(\C_2) = (\Q_2,\N_2,\to_2,\iota_2)$ 
where $\C_1 = (\A_1,\N_1)$
with $\A_1 = (A_1,\ldots,A_n)$
and $f_{A_j} = (f_{Q_j},f_{N_j}) : f_\A(A_j) \to A_j$.
Let
\[
f_\Q = f_{Q_1} \times \ldots \times f_{Q_n} : Q \to \Q_1,
\]
then there also exists a projection $\pi_Q: \Q_2 \to Q$.
Defining $\Sem(f) : \Sem(\C_2) \to \Sem(\C_1)$
as $\Sem(f) = (f_\Q \circ \pi_Q,f_\N)$
gives rise to a contravariant functor 
$\Sem: \Connector \to \PA^{op}$.
\end{theorem}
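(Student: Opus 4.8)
The plan is to verify the three things that make $\Sem$ a contravariant functor: that $\Sem(f)$ is a well-defined morphism in $\PA$ (hence a morphism $\Sem(\C_1)\to\Sem(\C_2)$ in $\PA^{op}$), that $\Sem$ preserves identities, and that it reverses composition. First I would pin down the two components of $\Sem(f)=(f_\Q\circ\pi_Q,\,f_\N)$. The map $\pi_Q\colon\Q_2\to Q$ is the tupling of the projections $\Q_2=\prod_{A\in\A_2}Q(A)\to Q(f_\A(A_j))$, one for each $j$; this is well defined even when $f_\A$ identifies two primitives, since one simply reuses the corresponding component. The map $f_\Q=f_{Q_1}\times\cdots\times f_{Q_n}$ is the product of the state maps of the primitive morphisms $f_{A_j}$, and on port names the component is $f_\N\colon\N_1\to\N_2$, which is the correct direction for a $\PA$-morphism $\Sem(\C_2)\to\Sem(\C_1)$.

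The core of the argument is to check that $\Sem(f)$ satisfies the two conditions of a port automata morphism. Preservation of the initial state is immediate: applying $\pi_Q$ to $\iota_2$ selects $\langle i_{f_\A(A_j)}\rangle_j$, and each $f_{A_j}$ preserves initial states, so $f_\Q$ sends this to $\langle i_{A_j}\rangle_j=\iota_1$. For the transition condition~(\ref{equ:aut-mor}) I would start from a transition $q\overset{S}{\to}p$ of $\Sem(\C_2)$; by rule~(\ref{rule:connector-sem}) it decomposes into component transitions $q_A\overset{S_A}{\to}p_A$ with $S=\bigcup_A S_A$ and $S_A\cap N(B)=S_B\cap N(A)$, which in particular gives $S\cap N(f_\A(A_j))=S_{f_\A(A_j)}$. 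For each $j$, the morphism condition of $f_{A_j}$ applied to the $f_\A(A_j)$-component yields a transition of $A_j$ whose label $T_j$ satisfies $f_{N_j}(T_j)=f_{N_j}(N_j)\cap S_{f_\A(A_j)}$; using the connector-morphism compatibility $f_{N_j}=f_\N|_{N_j}$ together with $f_\N(N_j)\subseteq N(f_\A(A_j))$ this rewrites to $f_\N(T_j)=f_\N(N_j)\cap S$.

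Two things then remain, and I expect them to be the main obstacle. I must exhibit an actual combined transition of $\Sem(\C_1)$, i.e. verify the premise $T_j\cap N_k=T_k\cap N_j$ of rule~(\ref{rule:connector-sem}); the clean way is to take the canonical label $T_j=N_j\cap f_\N^{-1}(S)$, which is exactly the transition supplied by $f_{A_j}$ when the port maps are injective, and otherwise agrees with it up to the choice of representative, whence $T_j\cap N_k=N_j\cap N_k\cap f_\N^{-1}(S)=T_k\cap N_j$ is automatic. Then I must verify the label equation $f_\N(\N_1)\cap S=f_\N(\bigcup_j T_j)$: since $\bigcup_j T_j=(\bigcup_j N_j)\cap f_\N^{-1}(S)$ we get $f_\N(\bigcup_j T_j)=f_\N(\bigcup_j N_j)\cap S$, and this equals $f_\N(\N_1)\cap S$ under the standing convention that every node lies on some primitive, i.e. $\bigcup_j N(A_j)=\N_1$. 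The delicate point throughout is the interplay between the per-primitive port maps $f_{N_j}$ and the global node map $f_\N$, and the need to realise the synchronised label $\bigcup_j T_j$ by a genuine transition; the surrounding manipulations are routine set algebra.

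Finally, functoriality is a bookkeeping check. For the identity connector morphism all component maps and $\pi_Q$ are identities, so $\Sem(\mathrm{id}_\C)=\mathrm{id}_{\Sem(\C)}$. For a composite $\C_1\overset{f}{\to}\C_2\overset{g}{\to}\C_3$, the primitive morphisms compose as $(g\circ f)_{A_j}=f_{A_j}\circ g_{f_\A(A_j)}$ and the node maps as $(g\circ f)_\N=g_\N\circ f_\N$; since products and projections are themselves functorial, the state components of $\Sem(g\circ f)$ and of $\Sem(f)\circ\Sem(g)$ coincide, and the port-name components agree after the two contravariant reversals, giving $\Sem(g\circ f)=\Sem(f)\circ\Sem(g)$ in $\PA$, which is exactly contravariance of $\Sem\colon\Connector\to\PA^{op}$.
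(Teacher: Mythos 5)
Your plan follows the same skeleton as the paper's (short) proof --- well-definedness of $\Sem(f)$, then componentwise functoriality bookkeeping --- and you correctly isolated the one point the paper glosses over: the transitions of $A_j$ supplied by the morphisms $f_{A_j}$ must be assembled into an actual transition of $\Sem(\C_1)$, which requires the pairwise premise $T_j\cap N_k=T_k\cap N_j$ of rule~(\ref{rule:connector-sem}). But your resolution of that point has a genuine gap. Condition~(\ref{equ:aut-mor}) for $f_{A_j}$ only guarantees the existence of \emph{some} transition of $A_j$ whose label $T_j$ satisfies $f_{N_j}(T_j)=f_{N_j}(N_j)\cap S_{f_\A(A_j)}$; it does not entitle you to pass to the canonical label $N_j\cap f_\N^{-1}(S)$, because when $f_{N_j}$ identifies ports, a transition carrying the canonical label need not exist in $A_j$ at all --- ``agrees with it up to the choice of representative'' is not an argument, since different representatives are different labels and only some of them are realized by transitions. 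Concretely: let $\C_1$ contain two single-state primitives $A_1,A_2$, both with port set $\{a,b\}$, where $A_1$ has loops labelled $\emptyset$ and $\{a\}$ and $A_2$ has loops labelled $\emptyset$ and $\{b\}$; let $\C_2$ contain one single-state primitive $B$ with port set $\{c\}$ and loops $\emptyset$ and $\{c\}$, and take $f_\A(A_1)=f_\A(A_2)=B$ and $f_\N(a)=f_\N(b)=c$. Each $f_{A_j}$ is a valid port automata morphism (for the $\{c\}$-loop take $T_1=\{a\}$, $T_2=\{b\}$), so $f$ is a connector morphism; but $T_1\cap N_2=\{a\}\neq\{b\}=T_2\cap N_1$, and rule~(\ref{rule:connector-sem}) forces equal labels here, so $\Sem(\C_1)$ has only the $\tau$-loop and no $S_1$ with $f_\N(S_1)=\{c\}=f_\N(\N_1)\cap S$ exists. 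Thus your pivotal step fails exactly where it is needed, and it cannot be repaired without an additional hypothesis such as injectivity of $f_\N$ on each $N_j$ (under which $T_j$ is indeed forced to be the canonical label and your computation goes through).

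A second, smaller unsupported move is your ``standing convention'' $\bigcup_j N_j=\N_1$: Definition~\ref{def:connector} only requires $N\subseteq\N$, so isolated nodes are permitted, and an isolated node of $\C_1$ mapped by $f_\N$ onto an active port of $\C_2$ breaks the label equation $f_\N(\N_1)\cap S=f_\N(S_1)$ for the same reason, since labels of $\Sem(\C_1)$ only ever contain nodes lying on primitives. To be fair, the paper's own proof never confronts either issue --- it simply asserts that the states map preserves transitions ``since both $\pi_Q$ and $f_\Q$ do'' and checks composition componentwise --- so your attempt is substantially more detailed than the published argument, your treatment of $\pi_Q$, initial states and functoriality matches the paper's, and your identification of the two delicate points (realizing the synchronized label, and the interplay of $f_{N_j}$ with $f_\N$) is exactly right. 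As written, however, the verification of the transition condition is incomplete: you should either state the injectivity and node-coverage hypotheses explicitly, or find (and justify) a compatible choice of the $T_j$, which the counterexample shows is not always possible.
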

\begin{proof}
$Q$ is the product of the state sets of those primitives
in $\C_2$ that are in the image of $f_\A$, and
$\pi_Q$ the projection to the product of the state sets 
of these reached primitives.
The states map of $\Sem(f)$ preserves the transitions
of $\Sem(\C_2)$, since both $\pi_Q$ and $f_\Q$ do.
Hence, $\Sem(f)$ is a valid port automata morphism.
For showing that $\Sem$ is a functor we
observe that composition is preserved: $\Sem(g \circ
f) = \Sem((g_\A,g_\N) \circ (f_\A,f_\N)) =
\Sem(g_\A \circ f_\A, g_\N \circ f_\N) = 
(f_\Q \circ g_\Q, f_\N \circ g_\N) =
(f_\Q \circ f_\N) \circ (g_\Q,g_\N) = 
\Sem(f) \circ \Sem(g)$, and analogously for the identity.
\end{proof}
This result in particular shows that a (structural) morphism
of connectors corresponds to an inverse simulation on the
semantical level. 
We now phrase our main result, i.e., the compositionality
of the port automata semantics for connectors.
\begin{theorem}[Compositionality of semantics]
\label{thm:compositionality}
The functor $\Sem$ maps pushouts of connectors to pullbacks of port automata,
i.e.
\[
\Sem(\C_1 +_{\C_0} \C_2) = \Sem(\C_1) \times_{\Sem(\C_0)} \Sem(\C_2).
\]
\end{theorem}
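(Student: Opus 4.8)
The plan is to compute both sides of the equation componentwise and verify agreement in each of the four data of a port automaton: states, port names, initial state, and transition relation, regarding the equality as holding up to the canonical reassociation isomorphism of products and pullbacks in \Set, which is legitimate by the Remark. Since \Sem is contravariant (Theorem~\ref{def:functor}), the span $\C_1 \leftarrow \C_0 \to \C_2$ is sent to a cospan $\Sem(\C_1) \to \Sem(\C_0) \leftarrow \Sem(\C_2)$, so the right-hand side is genuinely a pullback in \PA\ computed by Theorem~\ref{thm:pullbacks}. For the left-hand side I would first build $\C_3 = \C_1 +_{\C_0} \C_2$ by Theorem~\ref{thm:connector-pushouts}, whose primitives are the glued pullbacks $A_3^{(i)} = A_1^{(i)} \times_{A_0^{(i)}} A_2^{(i)}$ for each shared primitive $A_0^{(i)} \in \A_0$ together with the unshared primitives $C_a$ of $\C_1$ and $D_b$ of $\C_2$, and then apply Definition~\ref{def:connector-sem}.

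The port names and initial state should match on the nose. On both sides the port name set is $\N_1 +_{\N_0} \N_2$, formed as a pushout in \Set\ along the same node maps $f_{1,\N}, f_{2,\N}$ (on the right these are the node components of $\Sem(f_1),\Sem(f_2)$); and the initial state is in both cases the tuple of the initial states of all atomic primitives. For the states, the left-hand side is $\prod_i (Q(A_1^{(i)}) \times_{Q(A_0^{(i)})} Q(A_2^{(i)})) \times \prod_a Q(C_a) \times \prod_b Q(D_b)$, whereas the right-hand side is the \Set-pullback of $\Sem(\C_1) \to \Sem(\C_0) \leftarrow \Sem(\C_2)$, whose state maps, by the functor definition, are the projections onto the reached primitives followed by the products of the simulation state maps $f_{j,Q}^{(i)}$. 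Both sets consist of exactly the tuples agreeing under $f_{1,Q}^{(i)}$ and $f_{2,Q}^{(i)}$ on every shared coordinate, and a distributing-a-pullback-over-a-product argument identifies them.

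The real content is the agreement of the transition relations, and here I would rewrite both relations over the common vocabulary of local transitions of the atomic primitives $A_1^{(i)}, A_2^{(i)}, C_a, D_b$. A transition on either side is determined by a choice of local transition for each atomic primitive, an overall label equal to the union of the embedded local labels, and a synchronization constraint. The left-hand relation (semantics of $\C_3$) imposes, for each glued primitive, the internal \PA-pullback equation~(\ref{equ:caut-po}) of $A_3^{(i)}$, together with the connector synchronization~(\ref{rule:connector-sem}) among the primitives $A_3^{(i)}, C_a, D_b$ of $\C_3$. The right-hand relation imposes the connector synchronization within $\C_1$, the synchronization within $\C_2$, and the single outer equation~(\ref{equ:caut-po}). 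The plan is to show that both bundles of constraints are equivalent to one global condition: every pair of atomic primitives fires the same subset of the nodes it shares in $\N_3$.

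The main obstacle is this reconciliation of the synchronization conditions. The key facts I would establish are that two atomic primitives drawn from opposite sides can share a node only if it lies in the image of $\N_0$, since the pushout $\N_1 +_{\N_0} \N_2$ identifies nothing else; that the synchronizations within $\C_1$ and within $\C_2$ force, at each interface node, an all-or-nothing firing across the primitives of each side, making ``fired from the $\C_1$-side'' and ``fired from the $\C_2$-side'' well defined; and that, given this consistency, the single outer equation $g_1(S_1) \cap g_2(\N_2) = g_2(S_2) \cap g_1(\N_1)$ is equivalent to per-pair agreement across the interface, which on the left is carried partly by the internal equations of the glued $A_3^{(i)}$ and partly by the external synchronization of $\C_3$. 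Assembling this bookkeeping — locating where shared nodes can lie and passing between the global-label and per-pair formulations — is the delicate step, and it is what shows that the two transition relations, hence the two port automata, coincide.
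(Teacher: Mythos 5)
Your proposal is correct and follows essentially the same route as the paper's proof: a componentwise comparison in which the port names are \Set-pushouts preserved by $\Sem$, the state sets are identified via the pullback--product interchange $(X \times_Y Z)\times(X' \times_{Y'} Z') \cong (X\times X')\times_{(Y\times Y')}(Z\times Z')$ (the paper makes this uniform by writing each unshared primitive as $A_j \times_{\one} \one$, matching your separate bookkeeping of the $C_a$ and $D_b$), and the transition relations are matched by comparing (\ref{rule:connector-sem}) with (\ref{equ:caut-po}) together with the fact that labels compose by union on both sides. Your detailed reconciliation of the synchronization constraints --- cross-side node sharing only through the image of $\N_0$, the all-or-nothing firing per node within each side, and the equivalence of the single outer condition (\ref{equ:caut-po}) with per-pair agreement --- is a correct unpacking of what the paper compresses into the one-line assertion that (\ref{rule:connector-sem}) implies (\ref{equ:caut-po}), not a different method.
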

\begin{proof}
Both in {\Connector} and {\PA} the port name sets
are composed using pushouts in \Set, and $\Sem$ 
preserves these sets.  Hence, the port names are 
correctly mapped.
The primitives in the connector pushout
$\C_3 = \C_1 +_{\C_0} \C_2$ are either of the form
$A_1 \times_{A_0} A_2$ (case (\ref{equ:connector-pushout-1}))
or $A_j = A_j \times_\one \one$ with $j\in \{1,2\}$
(case (\ref{equ:connector-pushout-2})). The primitives' state
sets are of the same form, i.e. they can be all 
written as pullbacks. {\Sem} sends these 
state sets to their product. Now, since 
\[(X \times_Y Z) \times (X' \times_{Y'} Z') =
(X\times X') \times_{(Y\times Y')} (Z \times Z')\]
the state set of the resulting automaton $A_3 = \Sem(\C_3)$
is of the form $Q_1\times_{Q_0}Q_2$ where $Q_1$ is the state
set of $\Sem(\C_1)$, and $Q_2$ of $\Sem(\C_2)$ and
$Q_0$ of $\Sem(\C_0)$.  Hence, the (initial) states 
are also correctly mapped.
Moreover, the transition structure is preserved, 
since (\ref{rule:connector-sem}) implies (\ref{equ:caut-po})
and the port name sets on transitions are in both cases
composed by taking their union.
\end{proof}

\section{Applications}
\label{sec:encodings}

In this section, we show how Reo connectors and Petri nets
are modeled by our notion of connectors. This enables us
also to do a direct comparison of the two models.

\subsection{Modeling Reo connectors}

Reo connectors are directly modeled by our notion of connectors.
The primitives used in this paper are summarized in 
Fig.~\ref{fig:primitives}. It includes in particular 
the channel types $Sync$, $EmptyFIFO$ and $FullFIFO$.
Note also that all primitives
explicitly include $\tau$-steps to 
allow interleavings, i.e. other parts of the connector
can fire independently without (observable) activity of such 
a primitive. 

\begin{figure}[t]
\centering
\input{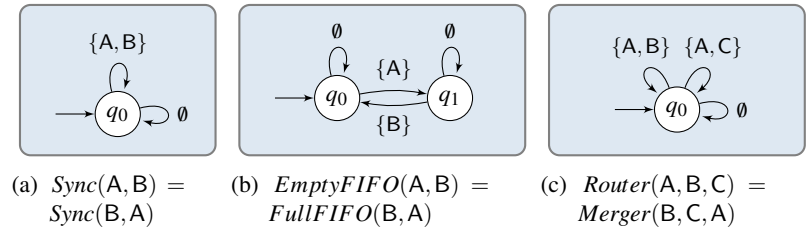}
\caption{Port automata for some Reo primitives.}
\label{fig:primitives}
\end{figure}

While channels are user-defined
entities, Reo defines a fixed semantics for nodes.
A node in Reo merges input from all target ends
ends and replicates it to all source ends. This
can be seen as a $1:n$ synchronization, 
as opposed the Hoare-style synchronizations
in our framework, where basically all
coinciding channel ends (no matter if source or target) 
are synchronized.
As a consequence we have to model the merging 
explicitly using a primitive. The $Merger$,
denoted by a circle with a cross in Fig.~\ref{fig:examples},
is used for this purpose. It has two source and one target end
and is therefore not a channel. We also define the dual
of this primitive, called $Router$. It has the same semantics
but its ends are inverse. Note again, that in the example
of Fig.~\ref{fig:primitives} and~\ref{fig:connector-pushout}
there are two hidden nodes between the $Router$, the $FullFIFO$
and the $Merger$, which are not relevant here.

The pushout diagram in Fig.~\ref{fig:connector-pushout} shows
a gluing of two Reo connectors along a common subconnector.
Note that this gluing is of a purely structural nature, 
although -- in principle -- it could also include a refinement of
primitives, i.e., if one of the primitive simulations is not 
an isomorphism. The port automata corresponding to the 
connectors in this example are depicted in
Fig.~\ref{fig:port-automata-pullback}. As we have shown 
in Theorem~\ref{thm:compositionality}, they form a pullback.
Note again, that we omitted unreachable states in the result.

\subsection{Modeling Petri nets}

Petri nets can also be modeled directly with our connector notion. As illustrated in the
motivating example in Fig.~\ref{fig:examples}, transitions in a Petri net
should be interpreted as nodes in this setting. Hence, places
become the primitives in the connector model. They are basically unbounded buffers
without ordering constraints (as opposed to the \emph{FIFO} channel in Reo).

Formally, the port automaton
$A_p=(Q,N,\to,i)$ for a place $p$ with $in(p)$ and $out(p)$ respectively 
the sets of incoming and outgoing transitions\footnote{This can be seen
as the dual of the usual notion of \emph{pre}- and \emph{post}-sets of Petri net transitions.}
of $p$ is defined in the following way:
\begin{itemize}
  \item $Q$ is the set of all markings of $p$, e.g. the natural numbers, or a
  finite set for places with capacities.
  \item $m \overset{T}{\longrightarrow} m'$ whenever a concurrent firing of
  the transitions $T \subseteq in(p) \cup out(p)$ turns the marking $m$ into $m'$. 
  \item $i\in Q$ the initial marking of $p$.
\end{itemize}
This encoding works because the transitions in a Petri net also do a basic
Hoare-style synchronization. Without giving a proof, we claim that the port
automaton $\Sem(N)$ of a Petri net $N$ correctly models its behavior, in the
sense that it has the set of all possible markings of the net as states and
transitions that correspond to a concurrent firing of net transitions.

Our notion of connector morphisms requires that the ports of primitives are preserved.
Since we interprete places as primitives (and transitions as nodes) our connectors
correspond to the following Petri net model:
\vspace{-0.2cm}
\[
\xymatrix{
N=P \ar@<0.5ex>[r]^{in} \ar@<-0.5ex>[r]_{out} & T^{\oplus}.
}
\]
Note that in the literature~(see e.g.~\cite{PER01}) one often finds
a similar but different model of Petri nets, where instead of the maps $in,out$ 
functions $pre,post: T \to P^{\oplus}$ are used. However, this only modifies the
notion of net morphism, but not the Petri net model itself. A comparison of the
two types of Petri net morphisms is out of the scope of this paper.

\subsection{Comparing Reo connectors and Petri nets}

As made evident in this paper, one can compare the basic version of Reo nodes, which
does only a primitive synchronization, with the transition concept in Petri nets.
On the other hand, primitives in our framework correspond to the places in a Petri net
and the channels, mergers, routers etc. in a Reo connector. Reo is more expressive
in the sense that it allows unbuffered  primitives, such as synchronous communication 
channels. In Petri nets, the primitives, i.e., the places are always buffered. 
From our point of view, this is the most important difference between Reo connectors 
and Petri nets: while in Petri nets synchronizations happen always locally at transitions,
in Reo synchronous primitives can be used to \emph{propagate} synchrony through the 
connector. Other features of Reo, such as context-dependency and priority go
beyond the focus of this paper.

\subsection{Modeling Reconfigurations}

Graph rewriting techniques, such as the double pushout (DPO) approach~\cite{DPO},
are a powerful tool for modeling rule-based reconfigurations. 
As a motivating example we return to the
pushout of Reo connectors in Fig.~\ref{fig:connector-pushout}. This diagram can
be interpreted as a reconfiguration in the following sense. The upper two
connectors together with the morphism between them is interpreted as a
(structural) reconfiguration rule. An application of this rule creates a
new $EmptyFIFO$ between the nodes $\mathsf{A}$ and $\mathsf{B}$.
The bottom two connectors can be regarded as an application of this rule.
In the bottom left is the connector before, and in the bottom right after the
rule application.

In our approach we can now perform this reconfiguration directly on the
corresponding automata. This becomes particularly interesting when
executing such a Reo connector as a state machine and reconfiguring it at
runtime. A prototypic implementation of this approach exists already.
A typical question in such scenarios is what the state of the connector
after a reconfiguration is, and whether it is actually valid. 
We can make this clear in the corresponding automata pullback in
Fig.~\ref{fig:port-automata-pullback}. For instance, if the connector 
before the reconfiguration is in state $r_1$, we can see that in the 
connector after the reconfiguration the state $s_1$ is mapped to $r_1$
by the constructed morphism. Thus, we can use this morphism to identify
the state after a reconfiguration step. However, we can see also in this example
that $r'_2$ has a preimage in the target automaton that is an unreachable state.
This indicates that a reconfiguration in this state produces an invalid system
state. In our example first $\mathsf{B}$ and then $\mathsf{C}$ fired, and then
the reconfiguration was performed. The problem here is essentially that at this point
there are two tokens in the connector.

\section{Conclusions and Future Work}
\label{sec:discussion}

We have presented an integrated structural and behavioral
model of connectors and showed compositionality with
respect to gluing constructions. We have then shown how Reo 
connectors and Petri nets can be modeled in this framework.

As future work, we would like to consider the traditional model 
of simulations for port automata morphisms, i.e. instead
of functions we want to use a notion of upward-closed relations 
for relating the states. With this change, the model will cover 
a wider class of connector morphisms. Moreover, we are 
interested in further properties of the semantical functor.

\section{Related work}
\label{sec:related-work}

Padberg et al. provide compositional semantics of Petri nets
in~\cite{PER01}. Their results are based on 
\emph{pre}/\emph{post}-net morphisms and a marking graph semantics,
and they cover a wider class of Petri nets. Moreover, the 
authors show preservation of general colimits, as opposed 
to our work were we consider only pushouts.

A wide range of automata semantics for Petri nets exist. 
For instance, Droste 
and Shortt consider so-called \emph{automata
with concurrency relations} in~\cite{DS02}, which are more 
restrictive than port automata. Essentially, a concurrent 
firing of two net transitions always implies the existence 
of an interleaved execution of the two net-transitions 
(parallel independence).The authors show that there is a
coreflection between the category of Petri nets and automata 
with concurrency.
Pushouts or general colimits are not considered.
They further also use a non-standard notion of net morphisms.

A compositional automata semantics for Reo,
called \emph{constraint automata}, is
given by Baier et al. in~\cite{BSAR06}. 
Our port automata are an abstraction of 
constraint automata. The main difference
is the used notion of compositionality.
In~\cite{BSAR06}, with compositionality 
the authors mean that the semantics of a 
connector can be computed out of the semantics
of its constituent primitives. However, our
notion of compositionality really combines
the structural level with the semantical,
in the sense that we show how a gluing of
connectors corresponds to a join operation
of their behaviors. In particular, we generalize
the join operation of~\cite{BSAR06} by allowing
to join two automata along a common interface 
automaton.

\nocite{BCEH05}

\bibliographystyle{eptcs} 
\bibliography{ice02}

\begin{thebibliography}{10}
\providecommand{\bibitemstart}[1]{\bibitem{#1}}
\providecommand{\bibitemend}{}
\providecommand{\bibliographystart}{}
\providecommand{\bibliographyend}{}
\providecommand{\url}[1]{\texttt{#1}}
\providecommand{\urlprefix}{Available at }
\providecommand{\bibinfo}[2]{#2}
\bibliographystart

\bibitemstart{Arbab04}
\bibinfo{author}{F.~Arbab} (\bibinfo{year}{2004}): \emph{\bibinfo{title}{Reo:
  {A} {C}hannel-based {C}oordination {M}odel for {C}omponent {C}omposition}}.
\newblock {\sl \bibinfo{journal}{Mathematical Structures in Computer Science}}
  \bibinfo{volume}{14}, pp. \bibinfo{pages}{329--366}.
\bibitemend

\bibitemstart{Tiles}
\bibinfo{author}{F.~Arbab}, \bibinfo{author}{R.~Bruni},
  \bibinfo{author}{D.~Clarke}, \bibinfo{author}{I.~Lanese} \&
  \bibinfo{author}{U.~Montanari} (\bibinfo{year}{2009}):
  \emph{\bibinfo{title}{Tiles for {R}eo}}.
\newblock In: {\sl \bibinfo{booktitle}{Recent {T}rends in {A}lgebraic
  {D}evelopment {T}echniques (WADT'09)}}, {\sl \bibinfo{series}{Lecture Notes
  in Computer Science}} \bibinfo{volume}{5486}. \bibinfo{publisher}{Springer},
  pp. \bibinfo{pages}{37--55}.
\bibitemend

\bibitemstart{BSAR06}
\bibinfo{author}{C.~Baier}, \bibinfo{author}{M.~Sirjani},
  \bibinfo{author}{F.~Arbab} \& \bibinfo{author}{J.~Rutten}
  (\bibinfo{year}{2006}): \emph{\bibinfo{title}{Modeling {C}omponent
  {C}onnectors in {R}eo by {C}onstraint {A}utomata}}.
\newblock {\sl \bibinfo{journal}{Science of Computer Programming}}
  \bibinfo{volume}{61}(\bibinfo{number}{2}), pp. \bibinfo{pages}{75--113}.
\bibitemend

\bibitemstart{BCEH05}
\bibinfo{author}{P.~Baldan}, \bibinfo{author}{A.~Corradini},
  \bibinfo{author}{H.~Ehrig} \& \bibinfo{author}{R.~Heckel}
  (\bibinfo{year}{2005}): \emph{\bibinfo{title}{Compositional {S}emantics for
  {O}pen {P}etri {N}ets based on {D}eterministic {P}rocesses}}.
\newblock {\sl \bibinfo{journal}{Mathematical Structures in Computer Science}}
  \bibinfo{volume}{15}, pp. \bibinfo{pages}{1--35}.
\bibitemend

\bibitemstart{BCS09}
\bibinfo{author}{M.~Bonsangue}, \bibinfo{author}{D.~Clarke} \&
  \bibinfo{author}{A.~Silva} (\bibinfo{year}{2009}):
  \emph{\bibinfo{title}{{A}utomata for {C}ontext-dependent {C}onnectors}}.
\newblock In: {\sl \bibinfo{booktitle}{Proceedings of 11th {I}nternational
  {C}onference on {C}oordination {M}odels and {L}anguages, {C}oordination'09}},
  {\sl \bibinfo{series}{Lecture Notes in Computer Science}}
  \bibinfo{volume}{5521}. \bibinfo{publisher}{Springer}, pp.
  \bibinfo{pages}{184--203}.
\bibitemend

\bibitemstart{CCA07}
\bibinfo{author}{D.~Clarke}, \bibinfo{author}{D.~Costa} \&
  \bibinfo{author}{F.~Arbab} (\bibinfo{year}{2007}):
  \emph{\bibinfo{title}{Connector {C}olouring {I}: {S}ynchronisation and
  {C}ontext {D}ependency}}.
\newblock {\sl \bibinfo{journal}{Science of Computer Programming}}
  \bibinfo{volume}{66}(\bibinfo{number}{3}), pp. \bibinfo{pages}{205--225}.
\bibitemend

\bibitemstart{Clarke07}
\bibinfo{author}{Dave Clarke} (\bibinfo{year}{2007}):
  \emph{\bibinfo{title}{Coordination: Reo, Nets, and Logic}}.
\newblock In: {\sl \bibinfo{booktitle}{Formal Methods for Components and
  Objects (FMCO)}}, {\sl \bibinfo{series}{Lecture Notes in Computer Science}}
  \bibinfo{volume}{5382}. pp. \bibinfo{pages}{226--256}.
\bibitemend

\bibitemstart{DPO}
\bibinfo{author}{A.~Corradini}, \bibinfo{author}{U.~Montanari},
  \bibinfo{author}{F.~Rossi}, \bibinfo{author}{H.~Ehrig},
  \bibinfo{author}{R.~Heckel} \& \bibinfo{author}{M.~L\"owe}
  (\bibinfo{year}{1997}): \emph{\bibinfo{title}{Handbook of Graph Grammars and
  Computing by Graph Transformation}}, chapter \bibinfo{chapter}{Algebraic
  {A}pproaches to {G}raph {T}ransformation~{I}: {B}asic {C}oncepts and {D}ouble
  {P}ushout {A}pproach}, pp. \bibinfo{pages}{163--245}.
\newblock \bibinfo{publisher}{World Scientific}.
\bibitemend

\bibitemstart{DS02}
\bibinfo{author}{M.~Droste} \& \bibinfo{author}{R.~M. Shortt}
  (\bibinfo{year}{2002}): \emph{\bibinfo{title}{From {P}etri {N}ets to
  {A}utomata with {C}oncurrency.}}
\newblock {\sl \bibinfo{journal}{Applied Categorical Structures}}
  \bibinfo{volume}{10}(\bibinfo{number}{2}), pp. \bibinfo{pages}{173--191}.
\bibitemend

\bibitemstart{EEPT06}
\bibinfo{author}{H.~Ehrig}, \bibinfo{author}{K.~Ehrig},
  \bibinfo{author}{U.~Prange} \& \bibinfo{author}{G.~Taentzer}
  (\bibinfo{year}{2006}): \emph{\bibinfo{title}{{F}undamentals of {A}lgebraic
  {G}raph {T}ransformation}}.
\newblock EATCS Monographs in Theoretical Computer Science.
  \bibinfo{publisher}{Springer}.
\bibitemend

\bibitemstart{SABB06}
\bibinfo{author}{J.~Guillen-Scholten}, \bibinfo{author}{F.~Arbab},
  \bibinfo{author}{F.~de~Boer} \& \bibinfo{author}{M.~Bonsangue}
  (\bibinfo{year}{2006}): \emph{\bibinfo{title}{A {C}omponent {C}oordination
  {M}odel {B}ased on {M}obile {C}hannels}}.
\newblock {\sl \bibinfo{journal}{Fundamenta Informaticae}}
  \bibinfo{volume}{73}(\bibinfo{number}{4}), pp. \bibinfo{pages}{561--582}.
\bibitemend

\bibitemstart{KAV09}
\bibinfo{author}{C.~Koehler}, \bibinfo{author}{F.~Arbab} \&
  \bibinfo{author}{E.~de~Vink} (\bibinfo{year}{2009}):
  \emph{\bibinfo{title}{Reconfiguring {Distributed} {R}eo {C}onnectors}}.
\newblock In: {\sl \bibinfo{booktitle}{Recent {T}rends in {A}lgebraic
  {D}evelopment {T}echniques (WADT'09)}}, {\sl \bibinfo{series}{Lecture Notes
  in Computer Science}} \bibinfo{volume}{5486}. \bibinfo{publisher}{Springer},
  pp. \bibinfo{pages}{221--235}.
\bibitemend

\bibitemstart{KC09}
\bibinfo{author}{C.~Koehler} \& \bibinfo{author}{D.~Clarke}
  (\bibinfo{year}{2009}): \emph{\bibinfo{title}{Decomposing {P}ort
  {A}utomata}}.
\newblock In: {\sl \bibinfo{booktitle}{Proceedings of 24th {A}nnual {ACM}
  {S}ymposium on {A}pplied {C}omputing, {SAC}'09}}. \bibinfo{publisher}{ACM}.
\bibitemend

\bibitemstart{KMLA10}
\bibinfo{author}{C.~Krause}, \bibinfo{author}{Z.~Maraikar},
  \bibinfo{author}{A.~Lazovik} \& \bibinfo{author}{F.~Arbab}
  (\bibinfo{year}{2010}): \emph{\bibinfo{title}{{M}odeling {D}ynamic
  {R}econfigurations in {R}eo using {H}igh-{L}evel {R}eplacement {S}ystems}}.
\newblock {\sl \bibinfo{journal}{Science of Computer Programming (to appear)}}
  .
\bibitemend

\bibitemstart{PER01}
\bibinfo{author}{J.~Padberg}, \bibinfo{author}{H.~Ehrig} \&
  \bibinfo{author}{G.~Rozenberg} (\bibinfo{year}{2001}):
  \emph{\bibinfo{title}{Behavior and {R}ealization {C}onstruction for {P}etri
  {N}ets {B}ased on {F}ree {M}onoid and {P}ower {S}et {G}raphs}}.
\newblock In: {\sl \bibinfo{booktitle}{Unifying Petri Nets, Advances in Petri
  Nets}}. \bibinfo{publisher}{Springer-Verlag}, \bibinfo{address}{London, UK},
  pp. \bibinfo{pages}{230--249}.
\bibitemend

\bibliographyend
\end{thebibliography}

\appendix

\section{Proofs}
\label{sec:proofs}

\begin{proof}[Theorem~\ref{thm:pullbacks}]
Let $j\in \{1,2\}$. The pullback morphisms $g_j=(g_{j,Q},g_{j,N})$ consist of
the projections $g_{j,Q}: Q_3\to Q_j$ and the injections $g_{j,N}: N_j \to
N_3$. We will denote both of them with $g_j$ if there is no confusion. We have
to show first that $g_1$ and $g_2$ are in fact \PA-morphisms.
Condition~(\ref{equ:aut-mor}) in the morphism definition reads for $g_j$:  
\[ 
	g_j(N_j) \cap \bigl(g_1(S_1) \cup g_2(S_2) \bigr) = g_j(S_j) 
\]
We show this here only for $j=1$, since the other case is analogously:
\vspace{3mm}\\
$g_1(N_1) \cap \bigl(g_1(S_1) \cup g_2(S_2)\bigr)$
\vspace{-3mm}
\setlength{\extrarowheight}{2mm}
\[
\begin{array}{rclr}
			 & = & \bigl(g_1(N_1) \cap g_1(S_1)\bigr) \cup \bigl(g_1(N_1) \cap g_2(S_2)\bigr) \\
			 & = & g_1(S_1) \cup \bigl(g_1(N_1) \cap g_2(S_2)\bigr) 	& \hspace{1cm} \mbox{since } S_1\subseteq N_1 \\
			 & = & g_1(S_1) \cup \bigl(g_2(N_2) \cap g_1(S_1)\bigr) 	& \mbox{by (\ref{equ:caut-po})} \\
			 & = & g_1(S_1)
\end{array}
\]
Now, the arrow $h: A_3 \to X$ exists and is unique due to the componentwise construction in \Set. What is left
to show is that $h$ is also a valid \PA-morphism.
We know for all $q \overset{N}{\longrightarrow} p$ in $X$
there exist transitions $h_j(q) \overset{S_j}{\longrightarrow}_j h_j(p)$ in $A_j$ with
\begin{equation}
	\label{equ:caut-po-mors}
	h_j(N_j) \cap N = h_j(S_j)
\end{equation}
since the $h_j$ are by assumption valid morphisms. Moreover we know that $h$ maps
a state $q\in Q_X$ in the automaton $X$ to the state
$h(q) = \langle h_1(q),h_2(q) \rangle$
in the pushout object $A_3$. Now we have to show that there exists a transition 
$h(q) \overset{S_3}{\longrightarrow}_3 h(p)$ in $A_3$ with 
\[ h(N_3) \cap N = h(S_3). \]
We construct $S_3$ in the following way:
\[
\begin{array}{rclr}
	 h\bigl(N_3\bigr) \cap N & = & h\bigl(g_1(N_1) \cup g_2(N_2)\bigr) \cap N 	& \hspace{1cm} \mbox{pushout in \Set} \\
			& = & \bigl(h\circ g_1(N_1) \cup h\circ g_2(N_2)\bigr) \cap N \\
			& = & \bigl(h_1(N_1) \cup h_2(N_2)\bigr) \cap N 					& \mbox{since } h\circ g_j = h_j \\
			& = & \bigl(h_1(N_1) \cap N\bigr) \cup \bigl(h_2(N_2) \cap N\bigr)  \\
			& = & h_1(S_1) \cup h_2(S_2) 										& \mbox{by (\ref{equ:caut-po-mors})} \\
			& = & h\circ g_1(S_1) \cup h\circ g_2(S_2) 							& \mbox{since } h\circ g_j = h_j \\
		 	& = & h\bigl(g_1(S_1) \cup g_2(S_2)\bigr) \\
\end{array}
\]
We have constructed $S_3 = g_1(S_1) \cup g_2(S_2)$ and it fulfills the required property.
The last step is to show that this transition in fact exists in $A_3$, which means that 
(\ref{equ:caut-po}) holds. Recall that $g_1$ and $g_2$ are valid morphisms:
\[
	g_1(N_1) \cap S_3 = g_1(S_1) 
	\quad \mbox{and} \quad	 
	g_2(N_2) \cap S_3 = g_2(S_2).
\]
We can follow that:
\begin{itemize}
\item $\bigl(g_1(N_1) \cap S_3\bigr) \cup g_2(S_2) = g_1(S_1) \cup g_2(S_2)$
\item $\bigl(g_2(N_2) \cap S_3\bigr) \cup g_1(S_1) = g_2(S_2) \cup g_1(S_1)$
\end{itemize}
and unify both equations:
\[
\begin{array}{rclr}
	(g_1(N_1) \cap S_3) \cup g_2(S_2) 											& = & (g_2(N_2) \cap S_3) \cup g_1(S_1) & \Leftrightarrow \\
	\Bigr(g_1(N_1) \cap \bigl(g_1(S_1) \cup g_2(S_2)\bigr)\Bigr) \cup g_2(S_2)	& = & \Bigl(g_2(N_2) \cap \bigl(g_1(S_1) \cup g_2(S_2)\bigr)\Bigr) \cup g_1(S_1) & \Leftrightarrow \\
	\bigl(g_1(N_1) \cup g_2(S_2)\bigr) \cap \bigl(g_1(S_1) \cup g_2(S_2)\bigr)	& = & \bigl(g_2(N_2)\cup g_1(S_1)\bigr) \cap \bigl(g_1(S_1) \cup g_2(S_2)\bigr) & \Leftrightarrow \\
	\bigl(g_1(N_1) \cup g_1(S_1)\bigr) \cap g_2(S_2)							& = & \bigl(g_2(N_2)\cup g_2(S_2)\bigr) \cap g_1(S_1) & \Leftrightarrow \\
	g_1(N_1) \cap g_2(S_2)														& = & g_2(N_2) \cap g_1(S_1) \\
\end{array}
\]
and we have shown (\ref{equ:caut-po}).
\end{proof}

\end{document}